\newcommand{\ketbra}[2]{|#1\rangle \langle #2|}
\newcommand{\ket}[1]{|#1\rangle}
\newcommand{\bra}[1]{\langle #1|}
\newcommand{\Tr}{\textrm{Tr}}
\newcommand{\ham}{\hat{H}}
\newcommand{\ergo}{\mathcal{E}}
\newcommand{\antiergo}{\mathcal{A}}
\newcommand{\workrange}{\mathcal{C}}
\newtheorem{definition}{Definition}
\newtheorem{prop}{Proposition}
\begin{document}

\title{The Battery Capacity of Energy-storing Quantum Systems}

\author{Xue Yang$^{1,2\, *}$, Yan-Han Yang$^1$
\footnote{These authors contributed equally to this work.}, Mir Alimuddin$^{3}$, Raffaele Salvia$^{4}$, Shao-Ming Fei$^{5,6}$
\footnote{feishm@cnu.edu.cn}, Li-Ming Zhao$^1$, Stefan Nimmrichter$^{7}$, Ming-Xing Luo$^{1,8,9}$
\footnote{mxluo@swjtu.edu.cn}}

\affiliation{1. The School of Information Science and Technology, Southwest Jiaotong University, Chengdu 610031, China;
\\
2. School of Computer and Network Security, Chengdu University of Technology, Chengdu 610059, China;
\\
3. Department of Physics of Complex Systems, S. N. Bose National Center for Basic Sciences, Kolkata 700106, India;
\\
4. Scuola Normale Superiore, I-56127 Pisa, Italy;
\\
5. School of Mathematical Sciences, Capital Normal University, Beijing 100048, China;
\\
6. Max-Planck-Institute for Mathematics in the Sciences, 04103 Leipzig, Germany;
\\
7. Naturwissenschaftlich-Technische Fakult\"{a}t, Universit\"{a}t Siegen, Siegen 57068, Germany;
\\
8. CAS Center for Excellence in Quantum Information and Quantum Physics, Hefei, 230026, China;
\\
9. Shenzhen Institute for Quantum Science and Engineering, Southern University of Science and Technology, Shenzhen 518055, China
}

\begin{abstract}
The quantum battery capacity is introduced in this letter as a figure of merit that expresses the potential of a quantum system to store and supply energy. It is defined as the difference between the highest and the lowest energy that can be reached by means of the unitary evolution of the system. This function is closely connected to the ergotropy, but it does not depend on the temporary level of energy of the system. The capacity of a quantum battery can be directly linked with the entropy of the battery state, as well as with measures of coherence and entanglement.
\end{abstract}

\maketitle

Quantum thermodynamics is a blossoming field that aims to bridge the gap between quantum physics and thermodynamics. The growing interest in quantum technologies has created fertile ground for the theoretical and experimental study of quantum batteries, i.e., of quantum devices that can store and release energy in a controllable manner \cite{Perarnau2015,Vin2016,Ciampini2017,Francica2017,Andolina2019,Monsel2020,Opatrny2021}. Thanks to their capability of exploiting coherence, quantum batteries could facilitate faster, higher-power charging than their classical counterparts.

A central quantity in the study of quantum batteries is the ergotropy \cite{Allahverdyan2004}, which represents the amount of energy that can be extracted from a given quantum battery state by means of cyclic modulations of the battery's Hamiltonian (or, equivalently, by unitary evolution). As the battery releases or stores energy \cite{Binder2015,Campaioli2017,Campaioli2018,Rossini2020,Salvia2021,Seah2021,Shaghaghi2022a,Salvia2022,Francica2022,Rodriguez2023}, its ergotropy may change from zero (in which case the battery is said to be in its zero-charge \emph{passive state}) \cite{PS1977,PS1978}, to a maximum value $\workrange(\hat\varrho; \ham)$ that can be calculated from the eigenvalues of the battery's density matrix $\hat\varrho$ and from the energy levels of the Hamiltonian $\ham$.

In this paper, we discuss the \emph{quantum battery capacity} $\workrange(\hat\varrho; \ham)$ as a figure of merit linking its work storage capacity to quantum features such as  quantum entropies \cite{Neumann,Tsallis,HHH}, or quantum coherences \cite{Aberg,South,Streltsov,Bu,Baumgratz}.
Although most of the properties of the battery capacity can be derived from the properties of the ergotropy, we argue that the battery capacity is in some sense a more fundamental quantity as it does not change when the battery is unitarily charged or discharged. Furthermore, at variance with the ergotropy, as a spectral functional of state $\hat\varrho$ and Hamiltonian $\ham$, the battery capacity can be a simpler quantity from a theoretical point of view. The fact that the battery capacity depends on the state only through its eigenvalues makes it easy to operationally connect it with entropy and coherence measure for general battery system with equally spaced energy levels.

More recently, composite quantum systems have been considered for work storage \cite{Mukherjee2016,Nielsen2000,Alicki2013,Hovhannisyan2013,Alimuddin2019,Alimuddin2020,Gyhm2022,Puliyil2022,Andolina2019,Delmonte2021}, tapping into the resource of quantum entanglement. The amount of work that can be extracted from a composite quantum system is usually bigger if we are allowed to perform global operations on the system, than if we can only act locally on its subsystems. This advantage is reflected in a bigger value of the ergotropy (called the \emph{ergotropic gap} \cite{Mukherjee2016,Puliyil2022}), and in different statistics of work extraction with respect to random unitary transformations \cite{Imai2023,Shaghaghi2022}. This advantage of global operations is also reflected in a gap in battery capacity; here we show, inspired by the known results for the ergotropic gap, that also the battery capacity gap can serve as a witness of bipartite entanglement and genuine multipartite entanglement.


\begin{figure}
\includegraphics[width=0.8\columnwidth]{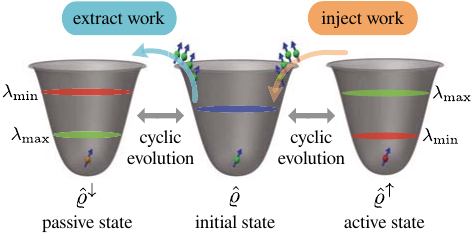}
\caption{\small Pictorial representation of the charging and discharging of a two-level quantum battery under cyclic evolution. Given an initial state $\hat\varrho$ with eigenvalues $\lambda_{\min}$ and $\lambda_{\max} \geq \lambda_{\min}$, the energy of the battery can vary between energy of the passive state $\lambda_{\min}E$ and the energy of the active state $\lambda_{\max}E$.}
\label{inject}
\end{figure}

\textit{Extracting and injecting work in a quantum battery.}---Consider an isolated $d$-dimensional quantum battery system equipped with a bare Hamiltonian $\ham$ that determines its energy spectrum and an initially prepared state $\hat\varrho$ that determines how much useful energy charge the battery can carry. Our aim is to assess the amount of charge that can be added or removed from the battery in control protocols that do not involve heat exchange with a thermal environment.

When the battery is subjected to a cyclic driving of the system Hamiltonian, its state undergoes an unitary evolution $\hat\varrho \to \hat{U}\hat\varrho\hat{U}^\dagger$ and its mean energy changes by
\begin{eqnarray}
W_{\hat{U}}(\hat\varrho; \ham ) \equiv \Tr[\hat\varrho\ham] - \Tr[\hat{U}\hat\varrho\hat{U}^\dagger \ham],
\label{def_WU}
\end{eqnarray}
which we identify, following the paradigm introduced in \cite{PS1977, PS1978}, with the amount of work \textit{extracted} from the battery.
The work extraction functional~(\ref{def_WU}) is bounded by the inequalities
\begin{eqnarray}
\ergo(\hat\varrho; \ham ) \geq W_{\hat{U}}(\hat\varrho; \ham ) \geq \antiergo(\hat\varrho; \ham)\;,
\label{bounds_Wu}
\end{eqnarray}
where the quantities $\ergo(\hat\varrho; \ham)$ and $\antiergo(\hat\varrho; \ham)$ are called the \emph{ergotropy} and the \emph{antiergotropy} \cite{Salvia2021} of the quantum state $\hat\varrho$ with respect to the Hamiltonian $\ham$:
\begin{eqnarray}
&&\ergo(\hat\varrho; \ham ) \equiv \max_{\hat{U} \in \mathbf{U}(d)} W_{\hat{U}}\left(\hat\varrho; \ham \right)\;;
\label{def_ergotropy} \\
&&\antiergo(\hat\varrho; \ham) \equiv \min_{\hat{U} \in \mathbf{U}(d)} W_{\hat{U}}(\hat\varrho; \ham )\; .
\label{def_antiergotropy}
\end{eqnarray}
Here, $\mathbf{U}(d)$ represents the unitary group of $d\times d$ matrices.

Let $\hat{U}^{(\downarrow)}$ and $\hat{U}^{(\uparrow)}$ denote, respectively, the unitary transformation that realize the maximum~(\ref{def_ergotropy}) and the minimum~(\ref{def_antiergotropy}) of the work extraction functional. The state $\hat\varrho^{\downarrow} \equiv \hat{U}^{(\downarrow)} \hat\varrho \hat{U}^{(\downarrow)\dagger}$ is called the \emph{passive state} associated with $\hat\varrho$, and it is the state with lowest energy in the unitary orbit of $\hat\varrho$. If a state is passive, then no more energy can be extracted from it using unitary transformations; it has zero ergotropy. Thus, $\ergo \geq 0$ describes how much work can be discharged from the battery state.

Conversely, the state $\hat\varrho^{\uparrow} \equiv \hat{U}^{(\uparrow)} \hat\varrho \hat{U}^{(\uparrow)\dagger}$ is known as the \emph{active state} associated with $\hat\varrho$ \cite{Salvia2021,Biswas2022,Tirone2022}, and it is the state with the highest energy in the unitary orbit of $\hat\varrho$. A state $\hat\varrho^{\uparrow}$ is active if and only if no further energy can be injected into it by means of unitary evolution; it has zero antiergotropy. Hence, $\antiergo \leq 0$, and its magnitude quantifies by how much the battery state can be charged.

Letting $\lambda_0 \leq\lambda_1\leq  \ldots \leq \lambda_{d-1}$ denote the eigenvalues of the quantum state $\hat\varrho$, and $\epsilon_0 \leq \epsilon_1\leq  \ldots \leq \epsilon_{d-1}$ the eigenenergies of the Hamiltonian $\ham=\sum_{i}\epsilon_i|\epsilon_i\rangle\langle \epsilon_i|$, the energy content of the extremal states becomes
\begin{eqnarray}
\Tr[\hat\varrho^{\downarrow} \ham] &=& \sum_{i=0}^{d-1} \lambda_i \epsilon_{d-1-i}  \; ;
\label{energia_pass} \\
\Tr[\hat\varrho^{\uparrow} \ham] &=& \sum_{i=0}^{d-1} \lambda_i \epsilon_i   \; .
\label{energia_att}
\end{eqnarray}
Accordingly, the ergotropy (antiergotropy) is obtained by subtracting the energy content of the passive (active) state from the initial mean energy $\Tr [\hat\varrho\ham]$.

The ergotropy is a sublinear and convex functional \cite{Perarnau2015},  given the Hamiltonian $\ham=\sum_{i}\epsilon_i|\epsilon_i\rangle\langle \epsilon_i|$ with $\epsilon_i\leq\epsilon_{i+1}$, for any $\hat\varrho$ and $\hat\tau$ such that $\Tr[\hat\varrho\ham ]=\Tr[\hat\tau\ham ]$, then we have
\begin{eqnarray}
\ergo(t\hat\varrho+(1-t)\hat\tau; \ham)\leq t\ergo(\hat\varrho; \ham)+(1-t)\ergo(\hat\tau; \ham) \;.
\label{Ergoconvex}
\end{eqnarray}
Using~(\ref{Ergoconvex}) and the identity $\antiergo(\hat\rho; \ham) = -\ergo(\hat\rho; -\ham)$ it is also immediate to see that the opposite holds for the antiergotropy,
\begin{eqnarray}
\antiergo(t\hat\varrho+(1-t)\hat\tau; \ham)\geq t\antiergo(\hat\varrho; \ham)+(1-t)\antiergo(\hat\tau; \ham) \;.
\label{Antiergoconcave}
\end{eqnarray}

\textit{The quantum battery capacity.}---Both the ergotropy and antiergotropy of a quantum system are not constant during an isentropic thermodynamic cycle. However, their difference is constant during any unitary evolution. Here we call it the \emph{battery capacity} of the system.

\begin{definition}
The battery capacity of a quantum state $\hat\varrho$ with respect to a Hamiltonian $\ham$ is given by
\begin{eqnarray}
\workrange(\hat\varrho; \ham) = \ergo(\hat\varrho; \ham) - \antiergo(\hat\varrho; \ham) = \Tr[\hat\varrho^\uparrow \ham] - \Tr[\hat\varrho^\downarrow \ham] \; .
\label{def_workcapacity}
\end{eqnarray}
\end{definition}

The battery capacity represents the amount of work that a quantum system can transfer during any thermodynamic cycle that keeps the battery's evolution unitary (as is the case for a quantum battery which is thermally isolated, but mechanically coupled to work source or a load). We can write $\workrange(\hat\varrho; \ham)$ as the difference between the energies of the two extremal states in the unitary orbit of $\hat\varrho$: the active state $\hat\varrho^{\uparrow}$, which realizes the maximum possible energy~\eqref{energia_att}, and the passive state $\hat\varrho^{\downarrow}$, with energy~\eqref{energia_pass}; see Fig.~\ref{inject}. Equivalently, the work capacity of a state $\hat\varrho$ is equal to the ergotropy of the active state associated with $\hat\varrho$ minus the antiergotropy of the relative passive state.

It is apparent from the definition that $\workrange(\hat\varrho; \ham)$ is an unitarily invariant functional of the state, i.e, $\workrange(\hat\varrho; \ham) = \workrange(\hat{U} \hat\varrho \hat{U}^\dagger; \ham) $. The battery capacity thus admits a simple expression in terms of the eigenvalues $\{ \lambda_i \}$ of the density matrix and of the energy levels $\{ \epsilon_i \}$ of the Hamiltonian. From~(\ref{energia_pass}),~(\ref{energia_att}), and from the definition~(\ref{def_workcapacity}), we deduce
\begin{eqnarray}
\workrange(\hat\varrho; \ham) &=& \sum_{i=0}^{d-1} \epsilon_i \left( \lambda_i - \lambda_{d-1-i} \right) \nonumber
\\
&=& \sum_{i=0}^{d-1} \lambda_i \left( \epsilon_{i} -\epsilon_{d-i-1}\right) \; .
\end{eqnarray}
Moreover, from~(\ref{def_workcapacity}),~(\ref{Ergoconvex}), and~(\ref{Antiergoconcave}) the battery capacity is, like the ergotropy, a convex and sublinear functional,
\begin{equation}
 \workrange(t\hat\varrho+(1-t)\hat\tau; \ham)\leq t\workrange(\hat\varrho; \ham)+(1-t)\workrange(\hat\tau; \ham).
\end{equation}
Finally, the invariance with respect to unitary transformation results in the following property:

\begin{prop}\label{majorization}
The battery capacity is a Schur-convex functional of $\hat\varrho$. That is, if a state $\hat\varrho$ is majorized by $\hat\tau$, ($\hat\varrho\prec\hat\tau$), then $\workrange(\hat\varrho; \ham) \leq \workrange(\hat\tau; \ham)$.
\end{prop}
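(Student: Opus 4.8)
The plan is to reduce the statement to a property of the map $\vec\lambda\mapsto\workrange$ viewed as an ordinary scalar function of the eigenvalue vector $\vec\lambda=(\lambda_0,\dots,\lambda_{d-1})$ of $\hat\varrho$, and then to invoke the classical fact that a permutation-symmetric convex function is Schur-convex. First I would record a variational form of the capacity. The passive- and active-state energies, Eqs.~(\ref{energia_pass}) and~(\ref{energia_att}), are exactly the extrema obtained from the rearrangement inequality, pairing eigenvalues with energies in opposite (resp.\ equal) order, so
\begin{equation}
\Tr[\hat\varrho^{\uparrow}\ham]=\max_{\sigma\in S_d}\sum_{i}\lambda_i\epsilon_{\sigma(i)},\qquad \Tr[\hat\varrho^{\downarrow}\ham]=\min_{\sigma\in S_d}\sum_{i}\lambda_i\epsilon_{\sigma(i)},
\end{equation}
with $S_d$ the symmetric group. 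For a fixed spectrum $\{\epsilon_i\}$ this gives the representation
\begin{equation}
\workrange(\hat\varrho;\ham)=g(\vec\lambda)\equiv\max_{\sigma\in S_d}\langle\vec\lambda,\sigma\vec\epsilon\rangle-\min_{\tau\in S_d}\langle\vec\lambda,\tau\vec\epsilon\rangle,\quad \langle\vec\lambda,\sigma\vec\epsilon\rangle\equiv\sum_i\lambda_i\epsilon_{\sigma(i)},
\end{equation}
where $\sigma\vec\epsilon$ denotes the permuted energy vector.

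From this representation two properties of $g$ can be read off directly. It is symmetric under permutations of the components of $\vec\lambda$: replacing $\vec\lambda$ by $\pi\vec\lambda$ only reindexes the optimisation, since $\langle\pi\vec\lambda,\sigma\vec\epsilon\rangle=\langle\vec\lambda,\pi^{-1}\sigma\vec\epsilon\rangle$ and $\pi^{-1}\sigma$ again ranges over all of $S_d$. It is also convex in $\vec\lambda$: the first term is a pointwise maximum of finitely many linear functions (hence convex), while $-\min_{\tau}\langle\vec\lambda,\tau\vec\epsilon\rangle=\max_{\tau}\langle\vec\lambda,-\tau\vec\epsilon\rangle$ is likewise a maximum of linear functions, so $g$ is a sum of two convex functions. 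This scalar convexity in $\vec\lambda$ is the counterpart of the operator convexity of $\workrange$ noted above; the point is that the variational (max/min over $S_d$) form is what makes both symmetry and convexity manifest at the level of the induced function on the probability simplex.

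Finally I would combine these with the structure of majorization. By the Hardy--Littlewood--P\'olya theorem, $\hat\varrho\prec\hat\tau$ is equivalent to $\vec\lambda(\hat\varrho)=D\,\vec\lambda(\hat\tau)$ for some doubly stochastic $D$, and by Birkhoff's theorem $D=\sum_k p_k P_k$ is a convex combination of permutation matrices $P_k$ with $p_k\geq0$ and $\sum_k p_k=1$. Applying Jensen's inequality (convexity) and then symmetry yields
\begin{equation}
\workrange(\hat\varrho;\ham)=g\Big(\textstyle\sum_k p_k P_k\vec\lambda(\hat\tau)\Big)\leq\sum_k p_k\,g(P_k\vec\lambda(\hat\tau))=\sum_k p_k\,g(\vec\lambda(\hat\tau))=\workrange(\hat\tau;\ham),
\end{equation}
which is the assertion. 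The main obstacle is the first step, namely passing from the operator-level convexity of $\workrange$ to genuine convexity \emph{and} permutation symmetry of the scalar function $g$ on the simplex of eigenvalues; once the rearrangement representation delivers both, the Birkhoff--Jensen argument is routine. An equivalent route, which I would mention as an alternative, is to write $\workrange=\sum_i\lambda_i(\epsilon_{d-1-i}-\epsilon_i)$ with coefficients $c_i=\epsilon_{d-1-i}-\epsilon_i$ nonincreasing in $i$, and to verify the Schur--Ostrowski condition (or apply Abel summation to the majorization partial sums) directly.
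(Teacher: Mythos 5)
Your proof is correct, and it takes a genuinely different route from the paper's. The paper proves Proposition~\ref{majorization} by splitting the capacity as $-\Tr[\hat\varrho^\downarrow\ham]+\Tr[\hat\varrho^\uparrow\ham]$ and importing from Ref.~\cite{Alimuddin2020b} the fact that the passive-state energy is Schur-concave (the active-state part then follows by the $\ham\to-\ham$ trick), so Schur-convexity is obtained as a sum of two Schur-convex terms. You instead give a self-contained argument: the rearrangement representation of~(\ref{energia_pass}) and~(\ref{energia_att}) exhibits $\workrange$ as a pointwise maximum of linear functionals of the spectrum, indexed by permutations, which makes both permutation symmetry and convexity on the simplex manifest; Schur-convexity then follows from the standard Hardy--Littlewood--P\'olya/Birkhoff/Jensen chain. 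What the paper's route buys is brevity at the cost of an external citation; what yours buys is a proof from first principles that also isolates exactly which structural features (symmetry plus scalar convexity in $\vec\lambda$, as opposed to the operator-level convexity of $\workrange$ in $\hat\varrho$) are responsible for the monotonicity under majorization. One small caveat on your closing aside: the expression $\workrange=\sum_i\lambda_i(\epsilon_{d-1-i}-\epsilon_i)$, which you quote from the paper, has the sign reversed (with $\lambda_i$ and $\epsilon_i$ both in increasing order the correct form is $\sum_i\lambda_i(\epsilon_i-\epsilon_{d-1-i})$, as one checks for $d=2$ where the capacity must be nonnegative); the Schur--Ostrowski/Abel-summation alternative goes through once the coefficients are taken with the correct sign, and your main Birkhoff--Jensen argument is unaffected.
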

\begin{proof}

From Ref.\cite{Alimuddin2020b} the passive-state energy $\Tr[\hat\varrho^\downarrow\ham]$ is a Schur-concave functional of $\hat\varrho$, implying that $-\Tr[\hat\varrho^\downarrow\ham]$ is Schur-convex \cite{Marshall2010}. Given that the passive state of $\hat\varrho$ with respect to the Hamiltonian $\ham$ is the active state of $\hat\varrho$ with respect to $-\ham$, the energy of the active state $\Tr[\hat\varrho^\uparrow\ham]$ is Schur-convex. Therefore the battery capacity, as the sum of two Schur-convex functionals, is Schur-convex.
\end{proof}

An important lower bound on the battery capacity can be given in terms of the state purity and the spread of the battery's energy spectrum, as measured by the variance of the Hamiltonian, $\sigma^2_{\ham} = \Tr[\ham^2] - \Tr[\ham]/d$\cite{Salvia2021}.

\begin{prop}
Letting $\sigma_{\ham} = \sqrt{\sigma^2_{\ham}}$ and $\sigma_{\hat\varrho} = \sqrt{\Tr[\rho^2] - 1/d}$, the capacity of a $d$-dimensional battery is bounded by
\begin{eqnarray}
\workrange(\hat\varrho; \ham) \geq 2 \frac{\sigma_{\ham} \sigma_{\hat\varrho}}{\sqrt{d^2-1}} \; .
\label{bound_withvariance}
\end{eqnarray}
\end{prop}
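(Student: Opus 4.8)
The plan is to read the battery capacity off as the full range of a single random variable and then bound that range below by its standard deviation. Let $\hat U$ be Haar-distributed on $\mathbf{U}(d)$ and set $X(\hat U) = \Tr[\hat U\hat\varrho\hat U^\dagger\ham]$. Because the active and passive states are precisely the highest- and lowest-energy states in the unitary orbit of $\hat\varrho$, the supremum and infimum of $X$ are $\Tr[\hat\varrho^\uparrow\ham]$ and $\Tr[\hat\varrho^\downarrow\ham]$ (attained, since $X$ is continuous on the compact group $\mathbf{U}(d)$). Hence, by~\eqref{def_workcapacity}, $\workrange(\hat\varrho;\ham)$ is exactly the length of the interval in which $X$ takes values.

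Next I would invoke Popoviciu's inequality: any random variable confined to an interval of length $L$ has variance at most $L^2/4$. Taking $L = \workrange(\hat\varrho;\ham)$ gives $\mathrm{Var}(X)\le \workrange(\hat\varrho;\ham)^2/4$, that is, $\workrange(\hat\varrho;\ham)\ge 2\sqrt{\mathrm{Var}(X)}$. It therefore suffices to prove that the Haar variance of $X$ equals $\sigma_\ham^2\sigma_{\hat\varrho}^2/(d^2-1)$, since then the square root reproduces the right-hand side of~\eqref{bound_withvariance}.

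The remaining step, which I expect to be the only real obstacle, is this second-moment integral over $\mathbf{U}(d)$. I would split off the identity components by writing $\ham = (\Tr\ham/d)\,I + \ham'$ and $\hat\varrho = (1/d)\,I + \hat\varrho'$ with $\ham',\hat\varrho'$ traceless; the identity pieces merely shift $X$ by the constant $\Tr\ham/d$, so $\mathrm{Var}(X) = \mathbb{E}[(\Tr[\hat U\hat\varrho'\hat U^\dagger\ham'])^2]$. This is evaluated with the Haar twirl $\mathbb{E}_{\hat U}[\hat U\hat\varrho'\hat U^\dagger\otimes\hat U\hat\varrho'\hat U^\dagger] = a\,(I\otimes I) + b\,\mathbb{S}$, with $\mathbb{S}$ the swap operator; contracting with $I\otimes I$ and with $\mathbb{S}$ and using $\Tr\hat\varrho' = 0$ fixes $b = \Tr[\hat\varrho'^2]/(d^2-1)$. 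Contracting the twirl against $\ham'\otimes\ham'$ and using $\Tr\ham' = 0$ then yields $\mathrm{Var}(X) = \Tr[\hat\varrho'^2]\,\Tr[\ham'^2]/(d^2-1)$. Finally I would identify $\Tr[\hat\varrho'^2] = \Tr[\hat\varrho^2] - 1/d = \sigma_{\hat\varrho}^2$ and $\Tr[\ham'^2] = \Tr[\ham^2] - (\Tr\ham)^2/d = \sigma_\ham^2$, completing the bound. As a consistency check I would first verify the qubit case $d=2$, where $X = \Tr\ham/2 + \tfrac12(\epsilon_1-\epsilon_0)(\lambda_1-\lambda_0)\,n_z$ with $\hat n$ uniform on the Bloch sphere, so that $\mathrm{Var}(X) = (\epsilon_1-\epsilon_0)^2(\lambda_1-\lambda_0)^2/12$ matches $\sigma_\ham^2\sigma_{\hat\varrho}^2/(d^2-1)$ for $d^2-1=3$.
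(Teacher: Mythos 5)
Your proof is correct and follows essentially the same route as the paper's: both arguments apply Popoviciu's inequality to the Haar-distributed work variable, whose range is exactly $\workrange(\hat\varrho;\ham)=\Tr[\hat\varrho^{\uparrow}\ham]-\Tr[\hat\varrho^{\downarrow}\ham]$, and then insert the identity $\mathrm{Var}_{\hat U}\bigl[W_{\hat U}(\hat\varrho;\ham)\bigr]=\sigma^2_{\hat\varrho}\sigma^2_{\ham}/(d^2-1)$. The only substantive difference is that the paper imports this variance identity from Ref.~\cite{Salvia2021}, whereas you derive it yourself via the two-fold Haar twirl of the traceless parts of $\hat\varrho$ and $\ham$ --- a correct and self-contained addition --- and you also use Popoviciu's inequality with the correct constant, $\mathrm{Var}\le(\max-\min)^2/4$, whereas the factor $4$ in the paper's Eq.~(\ref{Popoviciu}) sits on the wrong side (an evident typo, since the stated bound~(\ref{bound_withvariance}) only follows from the $1/4$ version).
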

See Appendix A  for the proof.

\textit{Battery capacity of many battery copies.}---It is possible to extract more work from, or charge more work to, an ensemble of $n$ identical copies of a quantum battery by using global operations on the whole ensemble \cite{Alicki2013,Campaioli2018,Safranek2022}. The figure of merit that quantifies the maximum work that can be extracted in this regime is the \emph{total ergotropy}, defined as~\cite{Tirone2022}:
\begin{eqnarray}
\ergo_{\rm tot}( \hat\varrho ; \ham ) \equiv \lim_{n \to \infty} \frac{1}{n} \ergo\left( \hat\varrho^{\otimes n} ;\ham^{(n)} \right) \; ,
\end{eqnarray}
with $\ham^{(n)}$ the Hamiltonian of $n$ non-interacting copies of the system. The total ergotropy can also be expressed as
\begin{equation}
 \ergo_{\rm tot}( \hat\varrho ; \ham ) = \Tr[\hat\varrho\ham] - \Tr[\hat\omega_{\beta(\hat\varrho)} \ham],
\end{equation}
where $\hat\omega_{\beta(\hat\varrho)} = e^{-\beta(\hat\varrho) \ham} / \mathcal{Z}$ is the Gibbs state of thermal equilibrium with a unique inverse temperature $\beta(\hat\varrho)$ such that the von Neumann entropies match, $S(\hat\omega_{\beta(\hat\varrho)}) = S(\hat\varrho)$. (Equivalently, the Gibbs state is the state of lowest energy among those with the same entropy as $\hat\varrho$.)

Similarly, we can define and express the \emph{total antiergotropy} as
\begin{eqnarray}
\antiergo_{\rm tot}( \hat\varrho ; \ham ) &\equiv& \lim_{n \to \infty} \frac{1}{n} \antiergo\left( \hat\varrho^{\otimes n}; \ham^{(n)} \right) \\
&=& \Tr[\hat\varrho\ham] - \Tr[\hat\omega_{-\beta^\ast(\hat\varrho)}; \ham]\; , \nonumber
\end{eqnarray}
where $\hat\omega_{-\beta^\ast(\hat\varrho)}$ is the inverse Gibbs state with negative inverse temperature such that $S(\hat\omega_{-\beta^\ast(\hat\varrho)}) = S(\hat\varrho)$, which is also the state of highest energy among those with the same entropy as $\hat\varrho$.

The battery capacity of an ensemble of $n \gg 1$ identical quantum systems will tend to the ``entropy-dependent battery capacity'' defined in Ref.\cite{Salamon2020},
\begin{eqnarray}
\workrange_{\rm tot}( \hat\varrho ; \ham ) &=& \lim_{n \to \infty} \frac{1}{n} \workrange\left( \hat\varrho^{\otimes n} ; \ham^{(n)} \right)  \nonumber \\
&=&\Tr[\hat\omega_{-\beta^\ast(\hat\varrho)}; \ham] - \Tr[\hat\omega_{\beta(\hat\varrho)}; \ham] \; .
\end{eqnarray}

\textit{The capacity of a two-level battery.}---%
We now consider the simplest example of a battery: a quantum system made of two levels $\ket{0}$ and $\ket{1}$, with corresponding Hamiltonian $\ham = E\ket{1}\bra{1}$. The battery capacity can be related to entropic quantities and measures of coherence \cite{Zheng,Napoli,Franc2020}. We generalize our findings to a $d$-dimensional battery with equally spaced energy levels in Section B of the supplemental material.

The density matrix $\hat{\varrho}$ on a two-level system can be written as
\begin{eqnarray}
\hat \varrho&=&
\begin{bmatrix}
1-q & ce^{i\theta} \\
ce^{-i\theta} & q
\end{bmatrix}
 \label{state} \; ;
\end{eqnarray}
with $q\in [0, 1]$ the population of the excited state, $c\in [0, \sqrt{q(1-q)} ]$ the amount of coherence in the state, and $\theta\in[0,2\pi]$. Herein, the governing Hamiltonian  is $\ham=E|1\rangle\langle1|$. The two eigenvalues of the density matrix~(\ref{state}) are $\lambda_\pm = [1 \pm  \sqrt{(2q-1)^2 + 4c^2}]/2$, with $\lambda_+ \geq \max \{ q, 1-q \}$ and $\lambda_- \leq \min \{ q,1-q \}$. The ergotropy of this state is $\ergo(\hat{\varrho}; \ham) = E (q - \lambda_-)$, while its antiergotropy is $\antiergo(\hat{\varrho}; \ham) = E ( q-\lambda_+)$. Hence, the battery capacity is
\begin{eqnarray}
\workrange(\hat\varrho; \ham)=E(1-2\lambda_-)=E\sqrt{(2q-1)^2 + 4c^2} \; .
\label{workrange_qubit}
\end{eqnarray}
This simple quantum battery is represented graphically in Fig.~\ref{inject}.

We observe that the base-2 von Neumann entropy $S(\hat\varrho)=-\Tr(\hat\varrho\log_2\hat\varrho)$ and the capacity of a two-level battery satisfy the inequality:
\begin{eqnarray}
\frac{\workrange(\hat\varrho; \ham)}{E}+S(\hat\varrho)\geq 1,
\label{capacity-Sentropy}
\end{eqnarray}
with equality only for pure states or the completely mixed state. This follows by virtue of (\ref{workrange_qubit}) with the inequality $S(\hat\varrho) \geq 2\lambda_- $ for $\lambda_-\in [0, 1/2]$.

A similar inequality, but in the opposite direction, holds for a range of Tsallis entropies, defined by \cite{Tsallis}:
\begin{equation}
 T_p(\hat\varrho)=\frac{1-\Tr \hat\varrho^p}{p-1} = \frac{1-\lambda_-^p-(1-\lambda_-)^p}{p-1} \; .
 \label{eq:tsallisentropy}
\end{equation}
For orders $p \geq 2$, we find
\begin{eqnarray}
\frac{\workrange(\hat\varrho; \ham)}{E} + T_p(\hat\varrho)\leq 1 \; .
\label{capacity-Tentropy}
\end{eqnarray}
This can be proven by using the function $g_p(\hat\varrho) = 2\lambda_- - T_p(\hat\varrho)$, which is monotonically increasing in $\lambda_- \in [0,1/2]$ whenever $p\geq 2$.

Finally, for the special case of the linear entropy, $L(\hat\varrho)\equiv T_2(\hat\varrho)=1-{\rm Tr}[\hat\varrho^2]=1-\lambda^2_- - \lambda^2_+$ \cite{HHH}, one easily obtains the equality:
\begin{eqnarray}
\frac{\workrange^2(\hat\varrho; \ham)}{E^2} + 2L(\hat\varrho)=1 \; .
\label{linearTentropy}
\end{eqnarray}
We prove similar operational relationships for equidistant $d$-level batteries in Appendix B.

We now turn to the relations between capacity and coherence. Three of the most common measures of coherence for quantum states are: the $l_1$-norm of coherence measuring the overall magnitude of off-diagonal elements, $\mathsf{Cohe}_{l_1}(\hat\varrho)=\sum_{i\neq j}\lvert\varrho_{i,j}\rvert$; the robustness of coherence  \cite{Napoli,Zheng},
\begin{equation}
 \textsf{Cohe}_{\rm RoC}(\hat\varrho)=\min_{\hat\tau \in \mathcal{D}(\mathbbm{C}^d)} \left\{s\geq 0 \, \bigg| \, \frac{\hat\varrho+s\hat\tau}{1+s} \in \mathcal{F} \right\},
\end{equation}
with $\mathcal{D}(\mathbbm{C}^d)$ the convex set of $d$-dimensional density operators and $\mathcal{F} \subset \mathcal{D}(\mathbbm{C}^d)$ the subset of incoherent states; and the relative entropy of coherence \cite{Baumgratz}, $\textsf{Cohe}_{\rm re}(\hat\varrho) = S(\hat\varrho_{\rm {inc}})-S(\hat\varrho)$, where $\hat\varrho_{\rm {inc}}$ is the state obtained by deleting all the off-diagonal elements from $\hat\varrho$.

For qubit states~\eqref{state}, the first two measures are equivalent,
\begin{eqnarray}
\mathsf{Cohe}_{l_1}(\hat\varrho)= \textsf{Cohe}_{\rm RoC}(\hat\varrho) = 2c \; .
\label{coerenza_qubit}
\end{eqnarray}
Hence the capacity \eqref{workrange_qubit} of a qubit battery can be decomposed into an incoherent and a coherent part,
\begin{eqnarray}
\workrange^2(\hat\varrho; \ham) &=& \workrange^2(\hat\varrho_{\rm inc}; \ham) + E^2 \mathsf{Cohe}^2_{l_1}(\hat\varrho) \nonumber \\
&=& \workrange^2(\hat\varrho_{\rm inc}; \ham) + E^2 \mathsf{Cohe}^2_{\rm RoC}(\hat\varrho) \; ,
\end{eqnarray}
where the incoherent part, $\workrange(\hat\varrho_{\rm inc}; \ham) = (1-2q)E$ for $q\in[0, \frac{1}{2}]$, is the battery capacity of the diagonal state $\hat\varrho_{\rm inc}$.

A similar decomposition does not hold for the relative entropy of coherence; however, a simple substitution from~\eqref{capacity-Sentropy} yields the inequality
\begin{eqnarray}
1+\textsf{Cohe}_{\rm re}(\hat\varrho) \leq \frac{\workrange(\hat\varrho; \ham)}{E} +S(\hat\varrho_{\rm inc}).
\end{eqnarray}
General cases of $d$-dimensional batteries are shown in Section B.

\textit{The capacity gap as an entanglement measure.}---In the case of composite quantum batteries comprised of two or more local Hamiltonians, an entangled battery state can accommodate non-local work storage that is more than the sum of its local parts. This gives rise to energy-based entanglement criteria for bipartite and multipartite systems.

Consider first a bipartite state $\hat\varrho$ on the Hilbert space ${\cal H}_A\otimes \cal {\cal H}_B$, with Hamiltonian $\ham={\hat H}_A\otimes \mathbbm{I}_B+\mathbbm{I}_A\otimes {\hat H}_B$. The \emph{ergotopic gap} $\delta_{\rm out}$ is the difference of the ergotropy obtained by global unitary operations and local unitary operations:
\begin{eqnarray}
\delta_{\rm out}(\hat\varrho; \ham)\equiv \ergo(\hat\varrho; \ham) - \ergo_{\rm L}(\hat\varrho; \ham)
 \nonumber \\
 = \max_{\hat{U} \in \mathbf{U}(d^2)} W_{\hat U}(\hat\varrho; \ham) - \max_{\hat{U}_\ell \in \mathbf{U}_{\rm L}(d^2)} W_{\hat U_\ell}(\hat\varrho; \ham) \; ,
\end{eqnarray}
where $\mathbf{U}_{\rm L}(d^2)$ is the group of local unitary operations of the form $\hat{U}_\ell = \hat{U}_A \otimes \hat{U}_B$.
Similarly, we can define the difference of the antiergotropy as
\begin{eqnarray}
\delta_{\rm in}(\hat\varrho; \ham) \equiv  \antiergo_{\rm L}(\hat\varrho; \ham) - \antiergo(\hat\varrho; \ham)
\nonumber \\
 = \min_{\hat{U}_\ell \in \mathbf{U}_{\rm L}(d^2)} W_{\hat U_\ell}(\hat\varrho; \ham)  - \min_{\hat{U} \in \mathbf{U}(d^2)} W_{\hat U}(\hat\varrho; \ham) \; .
\end{eqnarray}
The sum of $\delta_{\rm in}$ and $\delta_{\rm out}$ corresponds to the difference between the global capacity of the battery state and the battery capacity restricted to local operations. The latter is the sum of the individual capacities of the reduced battery states. We call the difference in global and local capacities the \textit{bipartite battery capacity gap}:
\begin{eqnarray} \label{delta_capacity}
\Delta_{A|B}(\hat\varrho; \ham) &\equiv& \delta_{\rm in}(\hat\varrho; \ham)+\delta_{\rm out}(\hat\varrho; \ham) \\
&=& \workrange(\hat\varrho; \ham) - \workrange(\hat\varrho_A; \ham_A) - \workrange(\hat\varrho_B; \ham_B) \; . \nonumber
\end{eqnarray}
This definition naturally extends to multipartite systems: the \emph{fully separable capacity gap} of an $n$-partite battery state $\hat\varrho$ with Hamiltonian $\ham=\sum_{i}\ham_{A_i}\otimes \mathbbm{1}$ will be
\begin{eqnarray}
\Delta_{A_1|\cdots |A_n}( \hat\varrho ; \ham ) \equiv \workrange( \hat\varrho ; \ham ) - \sum_{i=1}^n \workrange( \hat\varrho_{A_i}; \ham_{A_i} ) \; .
\end{eqnarray}

\begin{prop}\label{th1}
The fully separable battery capacity gap $\Delta_{A_1|\cdots |A_n}$ of a pure state $|\Psi\rangle$ on Hilbert space $\mathcal{H}=\otimes_{i=1}^n\mathcal{H}_{A_i}$, is non-increasing under local operations and classical communications (LOCC).
\end{prop}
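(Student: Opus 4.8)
The plan is to reduce the claim to a monotonicity statement about the local capacities alone. For a pure input $|\Psi\rangle$ the global spectrum is $(1,0,\dots,0)$, so the spectral formula for the capacity gives $\workrange(|\Psi\rangle\langle\Psi|;\ham)=\epsilon_{d-1}-\epsilon_0$, the full spread of the total spectrum; since $\ham=\sum_i\ham_{A_i}\otimes\mathbbm{I}$ this equals $\sum_i(\epsilon^{(i)}_{\max}-\epsilon^{(i)}_{\min})$, a constant that does not depend on which pure state we started from. Writing an LOCC protocol as $|\Psi\rangle\mapsto\{p_k,|\Phi_k\rangle\}$---the branches remaining pure once the classical outcomes are recorded---the global term therefore drops out of the average change of $\Delta_{A_1|\cdots|A_n}$, and proving the proposition reduces to showing that the summed local capacities cannot decrease,
\begin{equation}
\sum_{i=1}^n\workrange(\hat\varrho_{A_i};\ham_{A_i})\le\sum_k p_k\sum_{i=1}^n\workrange(\hat\varrho^{(k)}_{A_i};\ham_{A_i}),
\end{equation}
where $\hat\varrho_{A_i}$ and $\hat\varrho^{(k)}_{A_i}$ are the marginals at site $A_i$ of $|\Psi\rangle$ and $|\Phi_k\rangle$.

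Next I would control each marginal separately. Fixing a site $A_i$ and the bipartition $A_i|\overline{A_i}$, any multipartite LOCC protocol is in particular a bipartite LOCC protocol across this cut (group the remaining labs together; two-way classical communication is permitted). The stochastic generalization of Nielsen's theorem---the Jonathan--Plenio majorization criterion---then applies across the cut, and since the Schmidt spectrum across $A_i|\overline{A_i}$ is exactly the eigenvalue vector of the marginal, it yields
\begin{equation}
\lambda^{(i)}_\Psi\prec\sum_k p_k\,\lambda^{(i)}_{\Phi_k},
\end{equation}
with $\lambda^{(i)}_\Psi$ and $\lambda^{(i)}_{\Phi_k}$ the spectra of $\hat\varrho_{A_i}$ and $\hat\varrho^{(k)}_{A_i}$.

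The final step combines two convexity properties of the capacity, viewed as a symmetric function $f$ of the spectrum. By Proposition \ref{majorization} the capacity is Schur-convex, so the majorization above gives $\workrange(\hat\varrho_{A_i};\ham_{A_i})=f(\lambda^{(i)}_\Psi)\le f(\sum_k p_k\lambda^{(i)}_{\Phi_k})$. Independently, the convexity of $\workrange$ in the state descends to convexity of $f$ on the probability simplex---apply the state convexity to diagonal density matrices carrying the prescribed spectra, using that $\workrange$ depends only on eigenvalues---so Jensen's inequality gives $f(\sum_k p_k\lambda^{(i)}_{\Phi_k})\le\sum_k p_k f(\lambda^{(i)}_{\Phi_k})$. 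Chaining the two estimates and summing over $i$ yields the displayed inequality, hence $\sum_k p_k\Delta_{A_1|\cdots|A_n}(|\Phi_k\rangle;\ham)\le\Delta_{A_1|\cdots|A_n}(|\Psi\rangle;\ham)$; the deterministic case is the single-branch specialization.

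The main obstacle is the interface between the two middle steps rather than either one in isolation: LOCC only constrains the \emph{average} spectrum $\sum_k p_k\lambda^{(i)}_{\Phi_k}$, while Schur-convexity compares a single pair of vectors, so the convexity of $f$ (Jensen) is exactly what bridges them. This forces me first to upgrade the already-noted state convexity of $\workrange$ to genuine convexity in the eigenvalues, which is the one technical lemma to pin down carefully. A secondary point is that the per-cut reduction only bounds each marginal on its own; it is the additivity of the local capacity across sites---legitimate precisely because $\ham$ is a sum of local terms---that lets the independent cut-wise majorizations be summed into the single inequality we need.
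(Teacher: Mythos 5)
Your proof is correct, and it fills in a step that the paper leaves to a citation. The paper's own proof has the same skeleton as yours at the two ends --- it notes that all pure states share the same global capacity $\workrange(|\Psi\rangle\langle\Psi|;\ham)=\sum_i(\epsilon^{(i)}_{\max}-\epsilon^{(i)}_{\min})$ by unitary invariance, so that monotonicity of the gap reduces to monotonicity of the summed local capacities, and it invokes the Schur-convexity of the active-state energy from Prop.~\ref{majorization} --- but for the actual LOCC step it simply defers to the known result that the fully separable \emph{ergotropic} gap is LOCC-monotone (Puliyil et al.). You instead derive that step from first principles: the Jonathan--Plenio majorization criterion across each cut $A_i|\overline{A_i}$ gives $\lambda^{(i)}_\Psi\prec\sum_k p_k\lambda^{(i)}_{\Phi_k}$, and you bridge the average spectrum to the average capacity by combining Schur-convexity with convexity of the capacity as a function of the spectrum. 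That last lemma, which you rightly flag as the point needing care, does hold: on diagonal states $\workrange$ is a maximum of linear functionals of the populations minus a minimum of such functionals, hence a symmetric convex function of the eigenvalue vector, so Jensen applies. The payoff of your version is self-containedness and an explicit treatment of the probabilistic (branch-averaged) form of monotonicity, including the deterministic case as a specialization; the payoff of the paper's version is brevity, at the cost of leaning on an external result whose proof is, in essence, the very majorization-plus-convexity argument you wrote out.
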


Thanks to Proposition \ref{th1} (proven in Appendix C, the battery capacity gap can serve as a witness of entanglement in bipartite or multipartite systems. In Appendix C, we propose measures of genuine multipartite entanglement and give some elementary examples for bipartite and tripartite states.

\begin{figure}
\begin{center}
\includegraphics[width=\columnwidth]{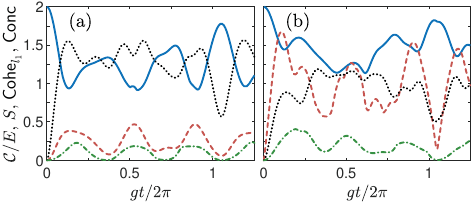}
\caption{Two-atom battery capacity as a function of interaction time with a resonant single-mode field in (a) a thermal state and (b) a coherent state of mean photon number $n_0=0.5$. We compare the capacity in units $E$ (${\workrange}/{E}$, solid line) to the base-2 von Neumann entropy ($S$, dotted), the $l_1$-coherence ($\mathsf{Cohe}_{l_1}$, dashed), and the concurrence ($\mathsf{Conc}$, dash-dotted). The battery starts in the maximum-capacity state $|eg\rangle$.}
\label{fig2}
\end{center}
\end{figure}

As a physical example, we consider a battery comprised of two two-level atoms interacting with a resonant, thermally or coherently populated cavity mode. The mode could serve to charge, discharge, or readout the battery state, but it also acts as a source of entropy and mediates coherence and entanglement, all affecting the battery capacity over time \cite{Andolina2019,Kim,Restrepo}. Given a uniform coupling rate $g$, we model this by a Tavis-Cummings Hamiltonian in the rotating wave approximation,
\begin{equation}
\ham=E \left( \hat{a}^\dagger\hat{a} + \frac{\hat\sigma_1^z + \hat\sigma_2^z}{2} \right)+ \hbar g\sum_{i=1}^2 \left( \hat{a}\hat\sigma_i^+ + h.c. \right),
\end{equation}
with $\hat a$ the cavity ladder operator, $\hat \sigma_i^z$ the Pauli-$z$ matrix of the $i$-th atom, and $\hat\sigma_i^+ = |e\rangle_i\langle g|$ the $i$-th excitation operator. Starting from a maximum-capacity state of the battery and a thermal cavity state with mean population $n_0$, $\hat\rho(0) = |eg\rangle\langle eg| \otimes \sum_{n=0}^\infty |n\rangle\langle n| n_0^n/(n_0+1)^{n+1}$, the reduced battery state $\hat\varrho(t) = \sum_n \langle n|\hat\rho(t)|n\rangle$ evolves as a mixed, symmetric two-qubit state over time $t$ \cite{Kim,Restrepo}, with varying amounts of coherence and atom-atom entanglement.

In Fig.~\ref{fig2}, we compare the two-atom battery capacity in $E$-units, $\workrange (\hat\varrho(t),(\hat\sigma_1^z + \hat\sigma_2^z)/2)$, against the von Neumann entropy, the $l_1$-coherence, and the atom-atom entanglement in terms of the concurrence \cite{Wootters}, for $n_0=0.5$. The capacity for an initially thermal cavity state in (a) drops to lower values and oscillates more strongly than the case of a coherent state in (b). Both examples illustrate that, while the rise in entropy as well as the transient oscillations of coherence and entanglement clearly influence the capacity, neither of those quantities alone can adequately predict its behaviour. See Appendix D for more examples at other temperatures, couplings, and for other squeezed cavity states.

\textit{Conclusions.}---We have introduced the capacity of a quantum battery system as the difference between the maximal and the minimal energy that can be reached from it by unitary evolution. It quantifies the amount of work that a quantum battery can at most supply during operation cycles. The battery capacity does not depend on the actual battery charge at any given moment, making it a suitable figure of merit for comparing different quantum battery models.

Due to its unitary invariance, the battery capacity can be put in relation with the entropy of the battery state, and with measures of coherence and entanglement, as we have discussed for simple models with an equidistant energy level spectrum. We hope that extending this analysis to other quantum battery models will lead to deeper insights into the connection between quantum thermodynamics, work storage, and quantum information theory. Future works could explore similar figures of merit for the capacity of quantum systems to store other resources.

\acknowledgments
We are grateful to Qinghu Chen and Miao Zhang for dicussions. This work was supported by the National Natural Science Foundation of China (Grants No. 62172341, No. 12204386, No. 12075159, and No. 12171044), National Natural Science Foundation of Sichuan Provence (No. 23NSFSC0752), Beijing Natural Science Foundation (No. Z190005), the Academician Innovation Platform of Hainan Province and Shenzhen Institute for Quantum Science and Engineering, Southern University of Science and Technology (No. SIQSE202105), and I-HUB Quantum Technology Foundation (No. I-HUB/PDF/2021-22/008).


\appendix
\renewcommand{\appendixname}{Appendix~}
\renewcommand{\thesection}{\Alph{section}}
\renewcommand{\thesubsection}{\arabic{subsection}}
\renewcommand{\theequation}{S\arabic{equation}}
\renewcommand{\thefigure}{S\arabic{figure}}

\section{Proof of the inequality~(12)}

\label{sec:bound_varianza}
To prove the inequality ~(12), consider the probability distribution $P(E ; \hat\varrho; \ham)$ of the work extracted with a random unitary transformation; i.e., the probability distribution of $W_{\hat{U}}(\hat\varrho; \ham)$ when $\hat{U}$ is sampled uniformly with respect to the Haar measure. In Ref.\cite{Salvia2021} it has been proven that the variance of $P(E ; \hat\varrho; \ham)$ is given by
\begin{eqnarray}
\underset{{\hat{U} \sim \mbox{Haar}(\textbf{U}(d)) }}{\mbox{Var}} \left[ W_{\hat{U}}(\hat\varrho; \ham) \right] = \frac{\sigma^2_{\hat\varrho}\sigma^2_{\ham}}{d^2 - 1}.
\label{varianza_haar}
\end{eqnarray}
We invoke Popoviciu's inequality on variance:
\begin{eqnarray}
\underset{{x \sim P(x)}}{\mbox{Var}} \left[ f(x) \right] \leq \frac{1}{4}\left( \max_{x \sim P(x)} f(x) - \min_{x \sim P(x)} f(x) \right)^2 \;
\label{Popoviciu}
\end{eqnarray}
with the identifications $x = \hat{U}$, $P(x) = \mbox{Haar}(\textbf{U}(d))$, and $f(\hat{U}) = W_{\hat{U}}(\hat\varrho; \ham)$. Combined with~(2) and~(9), this implies that
\begin{eqnarray}
\underset{{\hat{U} \sim \mbox{Haar}(\textbf{U}(d)) }}{\mbox{Var}} \left[ W_{\hat{U}}(\hat\varrho; \ham) \right] \leq \frac{1}{4}\workrange^2(\hat\varrho; \ham) \; ,
\end{eqnarray}
which combined with~(\ref{varianza_haar}) proves the inequality (12) in the main text.

\section{Battery capacity for equally spaced energy levels}
\label{sec:equispaced}

Here we provide bounds for the battery capacity as well as relations to entropy and coherence functionals for a $d$-level battery with an equally spaced spectrum, $\ham=\sum^{d-1}_{j=0} j E \ket{j}\bra{j}$.

\begin{prop}\label{Dualityproperty0}
Given a state $\varrho$ with $d$ eigenvalues $\{\lambda_1,\cdots,\lambda_{d-1}\}$ arranged in an increasing order and Hamiltonian $\ham=\sum^{d-1}_{j=1}jE|j\rangle\langle j|$, we have
\begin{eqnarray}
\Tr (\hat\varrho^\uparrow \ham)+ \Tr (\hat\varrho^\downarrow \ham)=(d-1)E.
\label{energygap1}
\end{eqnarray}
\end{prop}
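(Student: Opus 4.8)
The plan is to reduce the claim to a one-line computation built on the closed-form spectral expressions~(\ref{energia_pass}) and~(\ref{energia_att}) for the passive- and active-state energies, and then to exploit the reflection symmetry of the equally spaced spectrum. I write the eigenenergies as $\epsilon_i = iE$ for $i=0,1,\ldots,d-1$ and keep the eigenvalues ordered as $\lambda_0 \le \lambda_1 \le \cdots \le \lambda_{d-1}$, with $\sum_{i=0}^{d-1}\lambda_i = \Tr[\hat\varrho]=1$.

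First I would simply add~(\ref{energia_att}) and~(\ref{energia_pass}), collecting the two sums over the common index $i$:
\[
\Tr[\hat\varrho^\uparrow \ham] + \Tr[\hat\varrho^\downarrow \ham]
= \sum_{i=0}^{d-1} \lambda_i\,\epsilon_i + \sum_{i=0}^{d-1} \lambda_i\,\epsilon_{d-1-i}
= \sum_{i=0}^{d-1} \lambda_i\bigl(\epsilon_i + \epsilon_{d-1-i}\bigr).
\]
The crux — and the sole point where the equally spaced hypothesis enters — is that each complementary pair of levels sums to the same value, $\epsilon_i + \epsilon_{d-1-i} = iE + (d-1-i)E = (d-1)E$, independently of $i$. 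Factoring this constant out of the sum and using the normalization $\sum_i \lambda_i = 1$ then gives $(d-1)E$, which is exactly~(\ref{energygap1}).

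I do not anticipate any genuine obstacle: once the ladder symmetry is observed, the identity is immediate from the already-established spectral formulas and trace normalization. The only item needing minor care is index bookkeeping — confirming that the passive state pairs the $i$-th smallest eigenvalue $\lambda_i$ with the energy $\epsilon_{d-1-i}$, as fixed by~(\ref{energia_pass}) — but this is settled by the conventions of the main text. It is worth remarking that the same computation shows why equal spacing is essential: for a non-uniform spectrum the pair sums $\epsilon_i + \epsilon_{d-1-i}$ generally depend on $i$, so the total energy $\Tr[\hat\varrho^\uparrow\ham]+\Tr[\hat\varrho^\downarrow\ham]$ would retain a nontrivial dependence on the eigenvalues $\{\lambda_i\}$ rather than collapsing to the state-independent constant $(d-1)E$.
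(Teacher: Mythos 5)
Your proof is correct and follows essentially the same route as the paper's: both sum the spectral formulas~(\ref{energia_pass}) and~(\ref{energia_att}) and use the fact that complementary levels of the equally spaced ladder pair up to the constant $(d-1)E$, together with $\sum_i\lambda_i=1$. Your write-up is in fact cleaner than the paper's (whose final display contains a typographical slip), and the closing remark on why equal spacing is essential is a nice addition.
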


\begin{proof}
The energies of the states $\varrho^\downarrow$ and $\varrho^\uparrow$ are given by
\begin{eqnarray}
&&\Tr(\hat\varrho^\downarrow \ham)=\sum^{d-1}_{i=0}i\lambda_{d-1-i}E,
\label{framartinocampanaro}
\end{eqnarray}
and
\begin{eqnarray}
&&\Tr(\hat\varrho^\uparrow \ham)=\sum^{d-1}_{i=0}i\lambda_iE.
\label{dormitudormitu}
\end{eqnarray}
Summing~(\ref{framartinocampanaro}) and~(\ref{dormitudormitu}) we obtain
\begin{equation}
\Tr (\hat\varrho^\uparrow \ham)+ \Tr (\hat\varrho^\downarrow \ham)=dE-1\sum^{d-1}_{i=0}\lambda_iE=(d-1)E.
\label{Duality1}
\end{equation}
\end{proof}

We also remark that, in a Hamiltonian with equispaced energy levels, the ergotropic and antiergotopic gap coincide:

\begin{prop}
Given the HamiltonianS $\ham_A=\sum^{d_A-1}_{j=1}jE|j\rangle\langle j|$ and $\ham_B=\sum^{d_B-1}_{j=1}jE|j\rangle\langle j|$, for any state $\hat\varrho$ it holds
\begin{eqnarray}
\delta_{\rm in}(\hat\varrho_{AB}; \ham_{AB}) = \delta_{\rm out}(\hat\varrho_{AB}; \ham_{AB}).
\end{eqnarray}
\end{prop}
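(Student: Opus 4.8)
The plan is to reduce the identity to a single symmetric relation between extremal energies and then evaluate it on both the global and the local sides, the key structural input being the reflection symmetry of an equispaced spectrum. I read the statement in its natural bipartite setting, $\ham=\ham_A\otimes\mathbbm{I}+\mathbbm{I}\otimes\ham_B$ with each local term equispaced as in the statement (the multipartite case is identical). First I would strip away the common initial-energy term: writing $\ergo=\Tr[\hat\varrho\ham]-\Tr[\hat\varrho^\downarrow\ham]$ and $\ergo_{\rm L}=\Tr[\hat\varrho\ham]-\Tr[\hat\varrho^{\downarrow_{\rm L}}\ham]$ in terms of the globally and locally passive states, the term $\Tr[\hat\varrho\ham]$ cancels and $\delta_{\rm out}=\Tr[\hat\varrho^{\downarrow_{\rm L}}\ham]-\Tr[\hat\varrho^\downarrow\ham]$; the analogous manipulation for the active states gives $\delta_{\rm in}=\Tr[\hat\varrho^\uparrow\ham]-\Tr[\hat\varrho^{\uparrow_{\rm L}}\ham]$. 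Consequently $\delta_{\rm in}=\delta_{\rm out}$ is equivalent to
\[
\Tr[\hat\varrho^\uparrow\ham]+\Tr[\hat\varrho^\downarrow\ham]=\Tr[\hat\varrho^{\uparrow_{\rm L}}\ham]+\Tr[\hat\varrho^{\downarrow_{\rm L}}\ham],
\]
so it suffices to show that each side equals the same state-independent constant.

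For the local side I would use that $\ham$ is a sum of local terms, so that $\Tr[(\hat U_A\otimes\hat U_B)\hat\varrho(\hat U_A\otimes\hat U_B)^\dagger\ham]=\Tr[\hat U_A\hat\varrho_A\hat U_A^\dagger\ham_A]+\Tr[\hat U_B\hat\varrho_B\hat U_B^\dagger\ham_B]$ depends only on the marginals and decouples over the two subsystems. The local minimization and maximization therefore reduce to the single-system passive and active problems for $\hat\varrho_A$ and $\hat\varrho_B$, and applying Proposition~\ref{Dualityproperty0} to each equispaced factor yields $\Tr[\hat\varrho^{\uparrow_{\rm L}}\ham]+\Tr[\hat\varrho^{\downarrow_{\rm L}}\ham]=(d_A-1)E+(d_B-1)E$.

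For the global side I would invoke the same mechanism as in Proposition~\ref{Dualityproperty0}, applied now to the composite spectrum. The eigenenergies of $\ham$ are $\{(j+k)E\}$, and the involution $(j,k)\mapsto(d_A-1-j,\,d_B-1-k)$ sends each level to $(d_A+d_B-2)E$ minus itself; hence the sorted composite spectrum is reflection symmetric, $\epsilon_i+\epsilon_{N-1-i}=(d_A+d_B-2)E$ with $N=d_Ad_B$. Inserting the optimal pairings~\eqref{energia_pass} and~\eqref{energia_att} then gives $\Tr[\hat\varrho^\uparrow\ham]+\Tr[\hat\varrho^\downarrow\ham]=\sum_i\lambda_i(\epsilon_i+\epsilon_{N-1-i})=(d_A+d_B-2)E$, since $\sum_i\lambda_i=1$. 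This matches the local value, and the reduced identity above follows.

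The step I expect to be the crux is the global one: one must recognize that a tensor sum of equispaced (more generally, reflection-symmetric) spectra is again reflection symmetric, which is precisely what makes $\Tr[\hat\varrho^\uparrow\ham]+\Tr[\hat\varrho^\downarrow\ham]$ independent of $\hat\varrho$ and equal to the constant forced by the decoupled local side. Once this spectral symmetry is in hand, the local factorization and the two appeals to Proposition~\ref{Dualityproperty0} are routine, and the multipartite generalization to $\ham=\sum_i\ham_{A_i}\otimes\mathbbm{I}$ follows by the same bookkeeping.
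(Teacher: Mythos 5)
Your proof is correct, but it reaches the conclusion by a different route than the paper. The paper's argument is structural: it uses the antiergotropy--ergotropy duality $\antiergo(\hat\varrho;\ham)=-\ergo(\hat\varrho;-\ham)$ to write $\delta_{\rm in}(\hat\varrho;\ham)=\delta_{\rm out}(\hat\varrho;-\ham)$, then observes that for an equispaced spectrum $-\ham$ coincides with $\ham-C\mathbbm{I}$ (up to the level-reversal unitary, which is local), and finally invokes the invariance of the ergotropic gap under constant shifts of the Hamiltonian. You instead reduce the claim to the identity $\Tr[\hat\varrho^\uparrow\ham]+\Tr[\hat\varrho^\downarrow\ham]=\Tr[\hat\varrho^{\uparrow_{\rm L}}\ham]+\Tr[\hat\varrho^{\downarrow_{\rm L}}\ham]$ and show that both sides equal the state-independent constant $(d_A+d_B-2)E$: the local side by factorizing the energy over the marginals and applying Proposition~\ref{Dualityproperty0} to each, the global side by noting that the tensor sum of reflection-symmetric spectra is reflection symmetric, so $\epsilon_i+\epsilon_{N-1-i}$ is constant and the pairings \eqref{energia_pass}--\eqref{energia_att} collapse to $\sum_i\lambda_i\cdot(d_A+d_B-2)E$. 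Both proofs ultimately rest on the same symmetry of the equispaced spectrum under $\epsilon\mapsto C-\epsilon$, but your version is more explicit and arguably more careful: it avoids the paper's slightly abusive identification $\ham=-\ham+C\mathbbm{I}$ (true only up to a basis-reversing unitary, which must additionally be checked to be local for the gap to be preserved), it produces the value of the conserved quantity, and it makes transparent exactly where the hypothesis enters --- namely that $\Tr[\hat\varrho^\uparrow\ham]+\Tr[\hat\varrho^\downarrow\ham]$ is constant on the unitary orbit side \emph{and} on the local-unitary side, with matching constants. The price is that your argument is tied to the explicit bipartite (or multipartite) sum structure, whereas the paper's one-line duality applies verbatim to any Hamiltonian whose total and local spectra are simultaneously reflection symmetric.
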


\begin{proof}
It follows from the fact that this Hamiltonian satisfies $\ham = - \ham + C\mathbbm{1}$, where $C=(d-1)E$ is the value of the largest energy level. Therefore $\delta_{\rm in}(\hat\varrho; \ham) = \delta_{\rm out}(\hat\varrho; -\ham) = \delta_{\rm out}(\hat\varrho; \ham - C) = \delta_{\rm out}(\hat\varrho; \ham)$, where in the last passage we have used the invariance of the ergotropic gap with respect to shifts of the system Hamiltonian.
\end{proof}

\textit{Example S1}. Consider the system Hamiltonian $\ham_A=\ham_B=E|1\rangle\langle1|$ and the family of Werner states \cite{Werner} given by
\begin{eqnarray}
\hat\varrho_{v}=v|\psi\rangle\langle\psi|+\frac{1-v}{4}\mathbbm{1},
\end{eqnarray}
where $|\psi\rangle=\cos\theta|00\rangle+\sin\theta|11\rangle$, with $\theta\in (0,\pi/4]$ and $v\in [0,1]$. The spectral values are $\{\frac{1+3v}{4}, \frac{1-v}{4},\frac{1-v}{4},\frac{1-v}{4}\}$, for this entire class of states, its reduced density matrix has the spectra $\{v\cos^2\theta+\frac{1-v}{2}, v\sin^2\theta+\frac{1-v}{2}\}$,  it turns out to be
\begin{eqnarray}
\delta_{\rm in}(\hat\varrho_{v})=\delta_{\rm out}(\hat\varrho_{v})=2v \sin^2\theta E.
\end{eqnarray}

It is easy to bound the quantum battery capacity of this system with
\begin{eqnarray}\label{D1}
\mathcal{C}(\hat\varrho)&=&\sum_{j=0}^{d-1}jE(\lambda_j-\lambda_{d-1-j})
\\
&=&\sum_{j=0}^{\lfloor d/2 \rfloor}E(d-1-2j)(\lambda_{d-1-j}-\lambda_j)
\nonumber \\
&\leq &\sum_{j=0}^{\lfloor d/2 \rfloor}E(d-1-2j)(\lambda_{d-1}-\lambda_0), \nonumber
\end{eqnarray}
and similarly
\begin{eqnarray}\label{D2}
\mathcal{C}(\hat\varrho)&=&\sum_{j=0}^{d-1}jE(\lambda_j-\lambda_{d-1-j})
\\
&=&\sum_{j=0}^{\lfloor d/2 \rfloor}E(d-1-2j)(\lambda_{d-1-j}-\lambda_j)
\nonumber
\\
&\geq &\sum_{j=0}^{\lfloor d/2 \rfloor}E(d-1-2j)(\lambda_{\lfloor d/2 \rfloor+1}
-\lambda_{\lfloor d/2 \rfloor}).
\nonumber
\end{eqnarray}
From~(\ref{D1}) and~(\ref{D2}) we have the following simple bounds for the battery capacity in a system with equispaced energy levels:
\begin{eqnarray}
\left( \left\lfloor \tfrac{d}{2} \right\rfloor \right)^2 \left( \lambda_{d-1} - \lambda_{0} \right) &\geq &\frac{\workrange(\hat\varrho; \ham)}{E}
\nonumber
\\
&\geq& \left( \left\lfloor \tfrac{d}{2} \right\rfloor \right)^2 \left( \lambda_{\lfloor d/2 \rfloor + 1} - \lambda_{\lfloor d/2 \rfloor} \right). \nonumber \\
\end{eqnarray}

\subsection{Relationships between battery capacity and entropic functionals}

In this section we shall use the inequality~(12) to derive an inequality between battery capacity and entropy for an equispaced Hamiltonian $\ham=\sum_{j=1}^{d-1}jE|j\rangle\langle{j}|$.
The variance of an equispaced Hamiltonian is given by
\begin{eqnarray}
\sigma_{\ham}^2&=&\Tr\ham^2-\frac{\left(\Tr\ham\right)^2}{d}
\nonumber
\\
&=&\sum_{j=0}^{d-1}j^2E^2-\frac{E^2}{d}\left(\sum_{j=0}^{d-1}j\right)^2
\nonumber
\\
&=&\frac{(d^2-1)(2d-3)}{6}.
\label{HHD1}
\end{eqnarray}
Moreover, it is always true that
\begin{eqnarray}
\sigma_{\varrho}^2&=&\Tr\varrho^2-\frac{(\Tr\varrho)^2}{d}
\nonumber
\\
&=&\frac{d-1}{d}-L(\hat\varrho),
\label{HHD2}
\end{eqnarray}
where $L(\hat\varrho) = 1 - \Tr[\hat\varrho^2]$ is the linear entropy.

Replacing~(\ref{HHD1}) and~(\ref{HHD2}) into the inequality (12) we obtain the inequality
\begin{eqnarray}
\frac{\mathcal{C}^2(\hat\varrho)}{E^2}
\geq \frac{2(2d-3)(d-1)}{3d}-\frac{(4d-6)}{3} L(\hat\varrho),
\label{HHD7}
\end{eqnarray}
and therefore
\begin{eqnarray}
\frac{\mathcal{C}^2(\hat\varrho)}{E^2}+\frac{4d-6}{3} L(\hat\varrho)\geq \frac{2(2d-3)(d-1)}{3d}.
\label{HHD8}
\end{eqnarray}

Using (\ref{HHD8}) together with the inequailty $S(\hat\varrho)\geq L(\hat\varrho)$ we get a similar inequality for the Von Neumann entropy:
\begin{eqnarray}
\frac{\mathcal{C}^2(\hat\varrho)}{E^2}+\frac{4d-6}{3} S(\hat\varrho)\geq \frac{2(2d-3)(d-1)}{3d}.
\label{HHD9}
\end{eqnarray}

For the Tsallis entropy defined by $T_q(\hat\varrho)=\frac{1}{q-1}(1-\Tr\varrho^q)$ \cite{Tsallis}, it is easy to show that $T_q(\hat\varrho)\geq \frac{1}{q-1}L(\hat\varrho)$ for any $q\geq 2$. Therefore, from the inequality (\ref{HHD8}) we also get
\begin{eqnarray}
\frac{\mathcal{C}^2(\hat\varrho)}{E^2}+\frac{(4d-6)(q-1)}{3} T_q(\hat\varrho)
\nonumber
\\
\geq \frac{2(2d-3)(d-1)}{3d}.
\label{HHD10}
\end{eqnarray}

\subsection{Relationships with coherence}

In this subsection we explore the quantum battery capacity and quantum coherence in high-dimensional Hilbert spaces with equispaced energy levels.

From the definition of $l_1$-coherence we have
\begin{eqnarray}
\textsf{Cohe}_{l_1}(\hat\varrho) +1=\|\hat\varrho\|_1\geq \|\hat\varrho\|_\infty
\label{diseguaglianza_norme}
\end{eqnarray}
where $\|\hat\varrho\|_\infty=\max\{\lambda(\hat\varrho)\}$, is the maximum eigenvalue of $\hat\varrho$. From (\ref{diseguaglianza_norme}) and (\ref{D1}) follows that
\begin{eqnarray}
\frac{\mathcal{C}(\hat\varrho)}{E}\leq
\left( \left\lfloor \frac{d}{2} \right\rfloor\right)^2 (\textsf{Cohe}_{l_1}(\hat\varrho) +1) \; .
\label{HHD13a}
\end{eqnarray}

From the inequality (\ref{HHD9}) we also get a simple relationship for the relative entropy of coherence $\textsf{Cohe}_{\rm re}(\hat\varrho) = S(\hat\varrho_{\rm {ic}})-S(\hat\varrho)$, namely
\begin{eqnarray}
\frac{1}{E^2} \mathcal{C}^2(\hat\varrho)+\frac{4d-6}{3} S(\hat\varrho_{\rm {ic}})-\frac{4d-6}{3}\textsf{Cohe}_{\rm re}(\hat\varrho)
\nonumber
\\
\geq \frac{2(2d-3)(d-1)}{3d}.
\label{HHD12}
\end{eqnarray}

For the robustness of coherence (RoC) of a quantum state $\hat\varrho$ we use the following inequality, proven in Ref.\cite{Napo}:
\begin{eqnarray}
\textsf{Cohe}_{\rm roc}(\hat\varrho)\geq \|\hat\varrho-\hat\varrho_{\rm {ic}}\|_2^2\geq \|\hat\varrho\|^2-\|\hat{\varrho}_{\rm {ic}}\|_2^2.
\label{coheroc_inequality}
\end{eqnarray}
Using (\ref{coheroc_inequality}) into (\ref{D1}) we can see that
\begin{eqnarray}
\frac{\mathcal{C}(\hat\varrho)}{E}\leq
\left( \left\lfloor \frac{d}{2} \right\rfloor\right)^2(\textsf{Cohe}_{\rm roc}(\hat\varrho)+\|\hat{\varrho}_{\rm {ic}}\|_2^2) \; .
\label{HHD13}
\end{eqnarray}
Moreover, it has shown that RoC and $l_1$ norm coherence \cite{Napo} satisfy
\begin{eqnarray}
\frac{1}{d-1}\textsf{Cohe}_{l_1}(\hat\varrho) \leq \textsf{Cohe}_{\rm roc}(\hat\varrho)\leq \textsf{Cohe}_{l_1}(\hat\varrho).
\end{eqnarray}
Combining the above inequalities with (\ref{HHD13a}) we get
\begin{eqnarray}
\frac{\mathcal{C}(\hat\varrho)}{E}\leq
\left( \left\lfloor \frac{d}{2} \right\rfloor\right)^2 ((d-1)\textsf{Cohe}_{\rm roc}(\hat\varrho) +1) \; .
\label{HHD13b}
\end{eqnarray}

\textit{Example S2}. Consider the Hamiltonian $\ham=\sum_{j=1}^{d-1}jE|j\rangle\langle{j}|$ and the Werner state \cite{Werner}:
\begin{eqnarray}
\hat{\varrho}_v=\frac{1-v}{d}\mathbbm{1}+v|\phi\rangle\langle\phi|,
\label{HHD14}
\end{eqnarray}
where $|\phi\rangle=\frac{1}{\sqrt{d}}\sum_{i=0}^{d-1}|i\rangle$ and $v\in [0,1]$. The eigenvalues of \eqref{HHD14} are given by $\{\frac{1+(d-1)v}{d}, \frac{1-v}{d}, \cdots, \frac{1-v}{d}\}$. The quantum battery capacity is given by
\begin{eqnarray}
\mathcal{C}(\hat\varrho_v)=(d-1)vE
\label{HHD15}
\end{eqnarray}
from \eqref{D1}. The von Neumann entropy is given by
\begin{eqnarray}
S(\hat\varrho_v)&=&-\frac{1+(d-1)v}{d}(\log_2(1+(d-1)v)-\log_2d)
\nonumber
\\
&&-\frac{(d-1)(1-v)}{d}(\log_2(1-v)-\log_2d).
\label{HHD16}
\end{eqnarray}
For the linear entropy we get
\begin{eqnarray}
L(\hat\varrho_v)&=1-\frac{1}{d^2}((1+(d-1)v)^2
\nonumber
\\
&+(d-1)(1-v)^2),
\label{HHD17}
\end{eqnarray}
while the Tsallis entropy is
\begin{eqnarray}
T_p(\hat\varrho_v)=\frac{1}{p-1}(1-\frac{1}{d^p}((1+(d-1)v)^p
\nonumber
\\
+(d-1)(1-v)^p)).
\label{HHD18}
\end{eqnarray}
For the coherence the $l_1$ norm coherence we get
\begin{eqnarray}
\textsf{Cohe}_{l_1}(\hat\varrho)=v(d-1).
\end{eqnarray}
For the relative entropy of coherence we have
\begin{eqnarray}
\textsf{Cohe}_{\rm re}(\hat\varrho)= \log_2d-S(\hat\varrho).
\end{eqnarray}
The robustness of coherence is given by
\begin{eqnarray}
\textsf{Cohe}_{\rm roc}(\hat\varrho)\leq \textsf{Cohe}_{l_1}(\hat\varrho)=v(d-1).
\end{eqnarray}
All of these bounds are shown in Table \ref{tables1}.
\begin{center}
\begin{table}
\caption{The quantum quantities of Werner state (\ref{HHD14}). WC denotes the quantum battery capacity. VE denotes the von Neumann entropy. LE denotes the linear entropy. TE denotes the Tsallis entropy. L1C denotes the $l_1$ norm coherence. ROC denotes the robustness of coherence.}
\begin{tabular}{|c|c|}
  \hline
  Quantities &  relationships\\   \hline
  WC &  =$E\times$ L1C \\   \hline
  VE & $\geq $ LE,TE \\   \hline
  LE &  = TE($q=2$) \\   \hline
  TE &  $\leq$  VE \\   \hline
  L1C & =WC $/E$  \\
  \hline
  ROC &  WC $/E$\\
  \hline
\end{tabular}
\label{tables1}
\end{table}
\end{center}

\section{Entanglement measures for multipartite entanglement}
\label{app:entanglement_measures}

Consider an $n$-partite pure quantum state $|\Psi\rangle$ on Hilbert space $\mathcal{H}=\otimes_{i=1}^n\mathcal{H}_{A_i}$. The Hamiltonian for the $i$-th subsystem is given by $\ham_{A_i}=\sum_{j=0}^{d_i-1}jE|j\rangle\langle j|$. This means the involved systems are not completely degenerate, i.e., that there are eigenstates with different energy. Without loss of generality, we associate zero energy to the lowest energetic state $\ket{0}$. The total interaction-free global Hamiltonian is given by $\ham=\sum_{i=1}^{n}\tilde{H}_{A_i}$, where $\tilde{H}_{A_i}=\mathbbm{1}_{d_1\cdots d_{i-1}}\otimes \ham_{A_i}\otimes\mathbbm{1}_{d_{i+1}\cdots d_n}$. The energy of a global state $\hat\varrho$ on $\mathcal{H}$ is given by $\Tr(\hat\varrho \ham)$, and the energy of a subsystem $A_i$ in the state $\hat\varrho_{A_i}=\Tr_{\forall A_j, j\not=i}(\hat\varrho)$ is given by $\Tr (\hat\varrho_{A_i}\ham_{A_i})$.

First we prove Proposition 3 given in the main text.

\textit{Proof of the Proposition 3}. This property follows from the fact that the fully separable ergotropic gap is non-increasing under LOCC \cite{Puliyil2022}, and that the energy of the active states, $-\Tr(\rho_{A_i}^\uparrow \ham_{A_i})$, is Schur-convex, see the proof of Proposition 1. Moreover, all the pure states have the same capacity of $\workrange (|\Psi\rangle\langle \Psi|,\ham)$ for a given Hamiltonian due to its unitary invariance.  $\Box$

In what follows, we extend bipartite entanglement measures to genuine multipartite entanglement measures.

From a pure state $\ket{\Psi}_{A_1\cdots A_n}$ both the maximum work extraction (ergotropy) and maximum work injection (antiergotropy) are possible by global unitary operations. Here, $\hat\varrho^\downarrow:=\ket{0}\bra{0}^{\otimes n}$ is the passive state and $\hat\varrho^\uparrow:=\ket{d_1-1,\cdots, d_n-1}\bra{d_1-1,\cdots, d_n-1}$ is the active state of $\ket{\Psi}\bra{\Psi}$. The maximum extractable work is $\ergo(\ket{\Psi}\bra{\Psi}; \ham)=\Tr(\ket{\Psi}\bra{\Psi} \ham)-\Tr (\hat\varrho^\downarrow \ham)$. Conversely, the maximum injectable work is $\antiergo(\ket{\Psi}\bra{\Psi}; \ham)=\Tr(\ket{\Psi}\bra{\Psi} \ham)-\Tr( \hat\varrho^\uparrow \ham)$.

On the other hand, the local unitary operations of $\otimes_{i=1}^nU_{A_i}$ cannot always turn a pure state $\ket{\Psi}\bra{\Psi}$ into the ground state of the system. For pure product states (i.e., $\ket{\Psi}\bra{\Psi} = \bigotimes _{i=1}^n \ket{\psi_i}\bra{\psi_i}$), the fully separable battery capacity gap is zero, where all the subsystems are pure and thus can be transformed into the lowest or highest energetic state by local unitary operations. But for pure biseparable states the fully separable capacity gap $\Delta(\ket{\Psi}\bra{\Psi}; \ham)$ may be not zero.

\textit{Example S3}. Consider the three-qubit pure state \cite{Acin2000},
\begin{eqnarray}
    \ket{\Psi}_{ABC}&=&\lambda_0\ket{000}+\lambda_1e^{i\theta}\ket{100}
    +\lambda_2\ket{101}
   \nonumber\\
    &&
    +\lambda_3\ket{110}+\lambda_4\ket{111},
    \label{e02}
\end{eqnarray}
where $\lambda_i\geq 0$, $\sum_i \lambda_i^2=1$ and $0\leq \theta\leq \pi$. Since the Hamiltonian for the marginal systems are $\ham_A=\ham_B=\ham_C=E\ket{1}\bra{1}$, passive state energy will be equal to the smallest eigenvalue while active state energy will be equal to the largest eigenvalue. In terms of the generalised Schmidt coefficients, we get the following energies as
\begin{eqnarray}
    \Tr(\hat\varrho^\uparrow \ham_A)&=&E-\frac{\delta_{\rm in}^{A|BC}}{2}
       \label{e066} \\
    &=&\frac{E}{2} \left(1+\sqrt{1-4\lambda_0^2(1-(\lambda_0^2+\lambda_1^2))} \right),
      \nonumber\\
    \Tr(\hat\varrho_B^\uparrow \ham_B)&=&E-\frac{\delta_{\rm in}^{B|CA}}{2}
  \nonumber\\
    &=& \frac{E}{2} \left( 1+\sqrt{1-4(\lambda_0^2(\lambda_3^2+\lambda_4^2)+\gamma)}
    \right),  \nonumber\\
    \Tr(\hat\varrho_C^\uparrow \ham_C)&=&E-\frac{\delta_{\rm in}^{C|AB}}{2}
  \nonumber\\
    &=& \frac{E}{2} \left(1+\sqrt{1-4(\lambda_0^2(\lambda_2^2+\lambda_4^2)+\gamma)}
    \right), \nonumber
\end{eqnarray}
where $\gamma=|(\lambda_1\lambda_4e^{i\theta}-\lambda_2\lambda_3)|^2$ and $\delta_{\rm in}^{X|YZ}$ is the biseparable ergotropic gap of the maximum charging work.

This implies the tripartite battery capacity gap by combining the ergotropic gap of extractable work \cite{Puliyil2022} as
\begin{eqnarray}
\Delta_{A|B|C}&=&\delta^{A|B|C}_{\rm in}+\delta^{A|B|C}_{\rm out}
 \nonumber
\\
&=& \frac{\delta_{\rm in}^{A|BC}+\delta_{\rm in}^{B|CA}+\delta_{\rm in}^{C|AB}}{2}
\nonumber
\\
&&+\frac{\delta_{\rm out}^{A|BC}+\delta_{\rm out}^{B|CA}+\delta_{\rm out}^{C|AB}}{2}
 \nonumber
\\
&=&
3E-E\sqrt{1-4\lambda_0^2(1-(\lambda_0^2+\lambda_1^2))}
 \nonumber
\\
&&-E\sqrt{1-4(\lambda_0^2(\lambda_3^2+\lambda_4^2)+\gamma)}
 \nonumber
\\
&&-E\sqrt{1-4(\lambda_0^2(\lambda_2^2+\lambda_4^2)+\gamma)} \nonumber
\\
&=&\frac{E}{2}(\Delta_{A|BC}+\Delta_{B|CA}+\Delta_{C|AB})
 \label{e077}
\end{eqnarray}
where $\Delta_{X|Y}=\delta_{\rm out}^{X|Y}+\delta_{\rm in}^{X|Y}$ denotes the battery capacity gap of biseparable system $X$ and $Y$.  Note that the state (\ref{e02}) is a pure product state for $\gamma=\lambda_0=0$ or $\lambda_2=\lambda_3=\lambda_4=0$. This follows $\delta_{\rm in}^{A|B|C}=0$. On the other hand, the quantity is optimal for the maximally entangled GHZ state \cite{GHZ} ($\lambda_0=\lambda_4=\frac{1}{\sqrt{2}}$) with the value $\Delta_{A|B|C}=3E$.

The fully separable ergotropic gap or battery capacity gap cannot be used to characterize the genuineness of multipartite entanglement because it might be nonzero for biseparable states. In what follows, we present some entanglement measures inspired by the ergotropic gap of extractable work \cite{Puliyil2022}.

\begin{definition}
The minimum of the biseparable battery capacity gap (MBWCG) of pure state $\ket{\Psi}$ on Hilbert space $\mathcal{H}=\otimes_{i=1}^n\mathcal{H}_{A_i}$ is defined by
\begin{eqnarray}
\Delta^G_{\min}(\ket{\Psi}):= \min_{X}\Delta_{X|X^{\mathsf{c}}}(\ket{\Psi})
\end{eqnarray}
for all $X\subset \{A_1,A_2,\cdots, A_n\}$, where $X$ and $X^c$ denotes a bipartition of $\{A_1,A_2, \cdots, A_n\}$.

\end{definition}

Similar to the ergotropic gaps of work extraction and work injection,  $\Delta^G_{\min}(\cdot)$ provides a genuine measure for $n$-partite entangled states \cite{Guo2022,Puliyil2022}. For tripartite states it equals to double of genuinely multipartite concurrence (GMC) \cite{Ma2011}. The maximally entangled GHZ state \cite{GHZ} yields to the maximum value of $\Delta^G_{\min}$ for any $n$-qubit system.

\begin{definition}
The average biseparable capacity gap (ABCG) of pure state $\ket{\Psi}$ on Hilbert space $\mathcal{H}=\otimes_{i=1}^n\mathcal{H}_{A_i}$ is defined by
\begin{eqnarray}
\Delta^G_{\footnotesize\rm avg}(\ket{\Psi}):=\alpha\Gamma\left(\prod_{X} \Delta_{X|X^\mathsf{c}}(\ket{\Psi})\right)\sum_X \Delta_{X|X^\mathsf{c}}(\ket{\Psi})
\nonumber
\\
\end{eqnarray}
for $X\subset \{A_1, A_2,\cdots, A_n\}$, where $\Gamma(x)=0$ for $x=0$, and $\Gamma(x)=1$ otherwise. $\alpha$ denotes a nonzero constant for faithful entanglement measure.

\end{definition}

Similar to the average ergotropic gap of maximum work extraction \cite{Puliyil2022}, both entanglement measures turn out to be genuine, faithful, LOCC monotone, and able to distinguish tripartite GHZ and W states \cite{Acin2000}. $\Delta^G_{\rm avg}$ and $\Delta_{\min}^G$ are independent measures and may fail to distinguish some genuinely entangled states \cite{Puliyil2022}.

Similar to the biseparable ergotropic gaps of work extraction, the battery capacity gap satisfies the following polygon inequality as
\begin{eqnarray}
\Delta_{X|YZ}\leq\Delta_{Y|ZX}+\Delta_{Z|XY}
\end{eqnarray}
for $X,Y,Z\in\{A,B,C\}$ because the ergotropic gaps of work injection satisfies the same inequality. Inspired by Refs. \cite{Xie2021,Puliyil2022,Jin2023}, we define the battery capacity fill for tripartite states as follows.

\begin{definition}
The battery capacity fill (WCF) is defined by
\begin{equation}
\Delta^G_F(\ket{\Psi}_{ABC}):=\left[\frac{1}{3}Q\!\!\prod_{X\in \{A,B,C\}}\left(Q-\Delta_{X|X^\mathsf{C}}\right)\right]^{1/2},
\end{equation}
where $Q=\sum_{X\in \{A,B,C\}}\Delta_{X|X^\mathsf{C}}$.

\end{definition}

It is obvious that WCF is zero for all product states. It is also faithful, i.e., non zero for genuinely entangled states. It can be used for distinguishing states like GHZ and W states \cite{Acin2000,Puliyil2022}.

\begin{definition}
The battery capacity volume (WCV) of an $n$-qubit pure state is given by
\begin{eqnarray}
\Delta^G_V(\ket{\Psi}):= \left(\prod_{X}\Delta_{X|X^\mathsf{c}}(\ket{\Psi})\right)^{1/N} ,
\end{eqnarray}
for $X\subset \{A_1,A_2,\cdots, A_n\}$, where $N$ denotes the combination number of all bipartition of $\{A_1,A_2,\cdots, A_n\}$.

\end{definition}

WCV is genuine, faithful, and LOCC monotone, as all the biseparable battery capacity gaps are LOCC monotone \cite{Puliyil2022}. Moreover, WCF is inequivalent to MBWCG, ABCG, and WCV by proofs similar to the ones for the ergotropic gap of work extraction \cite{Puliyil2022}.

\textit{Example S4}. Consider the two-qubit pure state
$\ket{\psi}=\sqrt{\lambda}\ket{00}+\sqrt{1-\lambda}\ket{11}$. The battery capacity gap with respect to the local Hamiltonians $\hat{H}_A=\hat{H}_B=|1\rangle\langle1|$ is given by
\begin{eqnarray}
\Delta_{A|B}(\ket{\psi};\hat{H})=4 \left(1-\max \{\lambda, 1-\lambda\} \right).
\end{eqnarray}
This can be extended to mixed states by the standard method \cite{HHH} as
\begin{equation}
    \Delta_{A|B}(\hat\varrho,\ham)=\min \sum_i p_i \Delta_{A|B}(|\psi_i\rangle ; \ham),
\end{equation}
where the minimum over all possible pure state decompositions, $\varrho=\sum_i p_i|\psi_i\rangle\langle\psi_i|$. In a similar manner with the proof of Eq.(45) in ref.~\cite{Yang2022}, we obtain
\begin{eqnarray}
\Delta_{A|B}(\hat\varrho;\hat{H})=2 \left[ 1-\sqrt{1-C^2(\hat\varrho)} \right],
\label{C-capacity}
\end{eqnarray}
where $C(\hat\varrho)$ denotes the concurrence. Consider the isotropic state
\begin{eqnarray}
\hat\varrho_v= \frac{1}{3}[(1-v)\mathbbm{1}_4+(4v-1)\ketbra{\psi}{\psi}]
\end{eqnarray}
with the governing Hamiltonian $\ham=|1\rangle\langle1|$,
where $\mathbb{I}_n$ denotes the $n \times n$ identity matrix and $|\Psi^{-}\rangle=(|01\rangle-|10\rangle) / \sqrt{2}$.
Equation \eqref{C-capacity} and  $C(\varrho_v)=2v-1$ yield a positive gap given by
\begin{eqnarray}
\Delta_{A|B}(\hat\varrho_v;\ham)=2(1-2\sqrt{v-v^2})
\end{eqnarray}
 for $1/2<v\leq 1$ \cite{Hiroshima2000}.

\textit{Example S5}. Consider the generalized tripartite GHZ state  $\ket{\phi}=\cos\theta \ket{000}+\sin\theta\ket{111}$ \cite{GHZ} with $\theta\in (0,\pi/4]$.
For any bipartition, we obtain the gap $\Delta_{A|BC}(\ket{\phi};\ham)=4\sin^2\theta$. The symmetry of the state then implies the genuine multipartite entanglement measure \cite{Guo2022,Puliyil2022}
\begin{eqnarray}
\Delta^G_{\min}(\ket{\phi};\ham)&:= \min_{X}\Delta_{X|X^{\mathsf{c}}}(\ket{\phi};\ham)
\nonumber\\
&=4\sin^2\theta>0,
\end{eqnarray}
minimising over all $X\subset \{A, B, C \}$.

\section{Case study: two two-level atoms in a cavity}

Here we provide details on the example battery model in the main text, which consists of two two-level atoms interacting with a single cavity mode via the resonant Tavis-Cummings Hamiltonian in the rotating wave approximation,
\begin{equation}
\ham=E \left( \hat{a}^\dagger\hat{a} + \frac{\hat\sigma_1^z + \hat\sigma_2^z}{2} \right)+ \hbar g\sum_{i=1}^2 \left( \hat{a}\hat\sigma_i^+ + h.c. \right).
\end{equation}
Here, $\hat a$ denotes the ladder operator of the cavity mode, $\hat \sigma_i^z = |e\rangle_i\langle e| - |g\rangle_i\langle g|$ the Pauli-z matrix of the $i$-th atom, and $\hat\sigma_i^+ = |e\rangle_i\langle g|$ the $i$-th excitation operator. Given that the free part of the Hamiltonian proportional to $E$ and the coupling term commute on resonance, the associated unitary time evolution can be factorized into
\begin{equation}
    e^{-i\ham t/\hbar} = e^{-iE \hat a^\dagger \hat a t/\hbar} \hat U_0^\dagger (t) \hat U (t).
\end{equation}
The unitary $\hat U_0^\dagger (t) = e^{-iE (\hat\sigma_1^z + \hat\sigma_2^z) t/2\hbar}$ describes the free battery evolution, while $\hat U(t)$ represents the time evolution in the interaction picture, which can be given as an operator-valued matrix in the atomic basis of  $\{|gg\rangle, |ge\rangle, |eg\rangle, |ee\rangle\}$ as
 \cite{Kim,Restrepo}
\begin{equation}
\hat U(t)=\left(
\begin{array}{ccccc}
2\hat{a}\hat{\Gamma}\hat{a} & -i\hat{a}\hat{\Xi} & -i\hat{a}\hat{\Xi} & 2\hat{a}\hat{\Gamma}\hat{a}^\dag+\mathbbm{1}
\\
-i\hat{\Xi}\hat{a}  & \frac{1}{2}\hat{\Upsilon}^{-1}\hat{\Gamma} & \frac{1}{2}\hat{\Theta} &  -i\hat{\Xi}\hat{a}^\dag
\\
  -i\hat{\Xi}\hat{a}& \frac{1}{2}\hat{\Theta}
& \frac{1}{2}\hat{\Upsilon}^{-1}\hat{\Gamma} &   -i\hat{\Xi}\hat{a}^\dag
\\
2\hat{a}^\dag\hat{\Gamma}\hat{a}+\mathbbm{1}  & -i\hat{a}^\dag\hat{\Xi} & -i\hat{a}^\dag\hat{\Xi} &  2\hat{a}^\dag\hat{\Gamma}\hat{a}^\dag
\end{array}
\right) .
\end{equation}
Here, $\hat{\Gamma}=\hat{\Upsilon}(\cos(\hat{\Omega}gt)-\mathbbm{1})$,  $\hat{\Theta}=\cos(\hat{\Omega}gt)+\mathbbm{1}$, $\hat{\Xi}=\hat{\Omega}^{-1}\sin(\hat{\Omega}gt)$,  $\hat{\Omega}=\hat{\Upsilon}^{-1/2}=\sqrt{4\hat{a}^\dag\hat{a}+2\mathbbm{1}}$, and $\mathbbm{1}$ denotes the identity operator.

For an initial product state $\hat \rho (0) = \hat \varrho_0 \otimes \hat \rho_c$, the time evolution of the reduced battery state reads as
\begin{eqnarray}\label{eq:rho_TC_evo}
    \hat \varrho(t) &=& \Tr_c \left[  e^{-i\ham t/\hbar} \hat \varrho_0 \otimes \hat \rho_c  e^{i\ham t/\hbar} \right] \\
    &=& \hat U_0 (t) \sum_{n=0}^\infty \langle n|\hat U(t) \hat \varrho_0 \otimes \hat \rho_c \hat U^\dagger (t) |n \rangle \hat U^\dagger_0 (t),  \nonumber
\end{eqnarray}
where we can ignore the free evolution term since it affects neither of the state properties we are interested in.

In the main text, we focus on the case in which the cavity is initially in a thermal state or coherent state. Complementing the results shown there, we here provide additional simulation data for different parameters and initial battery states.

\begin{figure}
\begin{center}
\includegraphics[width=\columnwidth]{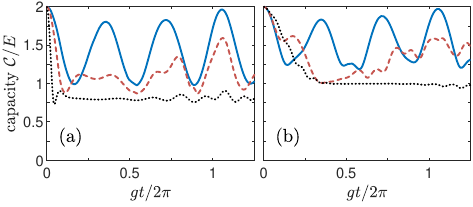}
\caption{Two-atom battery capacity as a function of time for (a) a thermal and (b) a coherent cavity state of average photon numbers $n_0=0.1$ (solid), $n_0=1$ (dashed), and $n_0=10$ (dotted). This plot assumes the initial pure (maximum-capacity) battery state $|eg\rangle$.
}\label{figs1}
\end{center}
\end{figure}

Figure \ref{figs1} plots the battery capacity over time for various average photon numbers $n_0$, comparing (a) an initially thermal to (b) a coherent cavity state. In both cases, higher $n_0$ generally lead to a stronger initial decrease of capacity and suppression of transient oscillations. In this example, the battery is initialized in the state $\hat \varrho(0) = |eg\rangle\langle eg|$.

\begin{figure}
\begin{center}
\includegraphics[width=\columnwidth]{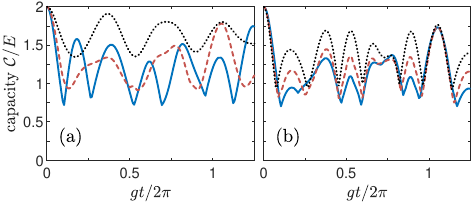}
\caption{Two-atom battery capacity over time for different initial states: (a) pure product states $|ee\rangle$ (solid), $|eg\rangle$ (dashed, same as $|ge\rangle$), and $|gg\rangle$ (dotted). (b) entangled states $|\psi\rangle = \cos\theta|eg\rangle+\sin\theta|ge\rangle$ with $\theta=\pi/16$ (solid), $ \theta=\pi/8$ (dashed), and $\theta=\pi/4$ (dotted). Here, $n_0=0.5$.}\label{figs2}
\end{center}
\end{figure}

Figure \ref{figs2} shows how the time-evolved capacity depends on the initial battery state. In (a), we compare initial pure product states and find that the greatest capacities are attained for $|gg\rangle$, while $|ee\rangle$ exhibits fast oscillations. In (b), we compare entangled initial states of the form $\cos\theta|eg\rangle+\sin\theta|ge\rangle$ for different $\theta$. More entangled states with $\theta$ close to $\pi/4$ exhibit greater transient capacity values.

\begin{figure}
\begin{center}
\includegraphics[width=\columnwidth]{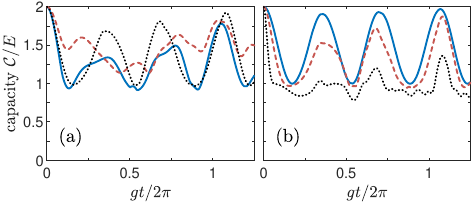}
\caption{(a) Two-atom battery capacity over time for different initial cavity states of the same average photon number $n_0=0.5$: thermal mixture (solid), coherent state (dashed), and squeezed state (dotted). (b) Capacity over time for an initially squeezed cavity state of mean photon numbers $n_0 = \sinh^2 r = 0.1$ (solid), $1$ (dashed), and $10$ (dotted). The battery is initialized in $|eg\rangle$.}
\label{figs3}
\end{center}
\end{figure}

Figure \ref{figs3} (a) compares the time evolution of the battery capacity for different cavity states of the same average photon number $n_0$. Apart from the thermal state, we also consider  $\rho_c = |\psi \rangle \langle \psi|$ with $\langle n|\psi\rangle = e^{-n_0/2} n_0^{n/2}/\sqrt{n!}$  (coherent state) and $ \langle 2n|\psi\rangle =\sqrt{\textrm{sech}(r)}[\sqrt{(2n)!}/n!]2^{-n}\tanh^n(r)$ (squeezed vacuum state) \cite{Delmonte2021}. The three states result in notable differences, with the squeezed state reaching the highest transient capacities. In (b), we compare different degrees of squeezing: higher $r$-values lead to higher effective $n_0 = \sinh^2 r$ and thus lower transient capacities. However, the transient oscillations are more pronounced compared to thermal states of the same $n_0$ in Fig.~\ref{figs1} (b).

\end{document}